

\documentclass[b5paper,10pt,abstractoff,DIV=calc,headings=normal,twoside]{scrartcl}

\RequirePackage[OT1]{fontenc}
\RequirePackage{amsthm}
\usepackage[font=scriptsize]{caption}
\usepackage{soul}
\usepackage[normalem]{ulem}
\newcommand{\stkout}[1]{\ifmmode\text{\sout{\ensuremath{#1}}}\else\sout{#1}\fi}
\usepackage{comment}

\usepackage[textsize=tiny]{todonotes}

\usepackage{adjustbox}
\usepackage{amsfonts}
\usepackage{amssymb}
\usepackage[makeroom]{cancel}
\usepackage{bbold}
\usepackage{float}

\usepackage[colorlinks=true,
            linkcolor=blue,
            urlcolor=blue,
            citecolor=blue]{hyperref}

\usepackage{algorithm}
\usepackage{algpseudocode}
\usepackage{color}

\usepackage{etoolbox}

\AtBeginEnvironment{tabular}{\tiny}

\usepackage{csvsimple}

\newcommand{\B}[1]{\mathbf{#1}}
\newcommand{\M}{\mathfrak{m}}
\newcommand{\MS}{\cal M}

\AtBeginEnvironment{tabular}{\small}

\newcommand{\By}{\boldsymbol{\mathsf{y}}}

\newtheorem{theorem}{Theorem}

\definecolor{Florian}{RGB}{255,0,0}

\definecolor{Geir}{RGB}{0,0,255}

\definecolor{AH}{RGB}{255,0,255}

\usepackage[english]{babel}
\usepackage{lmodern}
\usepackage{microtype}
\usepackage{amsmath,amssymb,amsthm,amsfonts,graphicx,etoolbox,scrlayer-scrpage}
\usepackage[noblocks]{authblk}
\makeatletter
\patchcmd{\@maketitle}{\huge}{\Large}{}{}
\patchcmd{\abstract}{\quotation}{}{}{}
\AtBeginEnvironment{abstract}{\noindent\ignorespaces}
\AtEndEnvironment{abstract}{\par\mbox{}}

\setlength{\tabcolsep}{3pt}

\newcommand{\shortauthor}{}
\newcommand{\shorttitle}{\@title}
\makeatother
\setkomafont{author}{\small}
\setkomafont{title}{\rmfamily\bfseries}
\setkomafont{disposition}{\rmfamily\bfseries}

\def\AMS#1{\par\noindent \textbf{AMS subject classification: }#1\par}

\newcommand{\keywords}[1]{\par\noindent\textbf{Keywords: }#1}
\subject{\vspace{-2\baselineskip}}\date{\vspace{-2\baselineskip}}
\pagestyle{scrheadings}
\rehead[]{\shortauthor}\lohead[]{\shorttitle}\lehead[]{}\rohead[]{}
\cfoot[]{}\ofoot[]{}\ifoot[]{}\recalctypearea
\interfootnotelinepenalty=10000
\theoremstyle{plain}

\theoremstyle{definition}

\theoremstyle{remark}

\renewenvironment{abstract}{\bigskip\noindent\begin{minipage}{\textwidth}\setlength{\parindent}{15pt}\paragraph{Abstract:}}{\end{minipage}}


\begin{document}


\renewcommand{\shortauthor}{A. Hubin, F. Frommlet and G. Storvik}

\title{Reversible Genetically Modified Mode Jumping MCMC}

\author[1]{Aliaksandr Hubin\thanks{Corresponding author: aliaksah@math.uio.no. All authors thank NN9862K and NN9244K projects from \url{sigma2.no} for HPC resources provided.}}
\affil[1]{University of Oslo}
\author[2]{Florian Frommlet}
\affil[2]{Medical University of Vienna} 
\author[3]{Geir Storvik}
\affil[3]{University of Oslo} 
\maketitle

\begin{abstract}

In this paper, we introduce a reversible version of a genetically modified mode jumping Markov chain Monte Carlo algorithm (GMJMCMC) for inference on posterior model probabilities in complex model spaces, where the number of explanatory variables
is prohibitively large for classical Markov Chain Monte Carlo methods.
Unlike the earlier proposed GMJMCMC algorithm, the introduced algorithm is a proper MCMC and its limiting distribution corresponds to the posterior marginal model probabilities in the explored model space under reasonable regularity conditions. 
  
\end{abstract}

\keywords{Markov chain Monte Carlo; Mode jumping in MCMC; Genetic algorithms; Bayesian Model selection; Bayesian Model averaging}

\smallskip

 \AMS{62-02, 62-09, 62F07, 62F15, 62J12, 62J05, 62J99, 62M05, 05A16, 60J22, 92D20, 90C27, 90C59} 


\section{Introduction}\label{sec:intro}
A genetically modified Markov chain Monte Carlo algorithm (GMJMCMC) was introduced \cite{hubin2018thesis, hubin2018deep, hubin2018novel} for Bayesian model selection/averaging problems when the total number of covariates (including functions of covariates) is prohibitively large. Applications include GWAS studies with Bayesian generalized linear models \cite{hubin2018thesis} as well as Bayesian logic regressions (BLR) \cite{hubin2018novel} and Bayesian generalized nonlinear models (BGNLM) \cite{hubin2018deep}. If certain regularity conditions are met, the GMJMCMC algorithm will asymptotically explore all models in the defined model spaces. However,  GMJMCMC is not a proper MCMC algorithm in the sense that its limiting distribution does not correspond to the marginal posterior model probabilities and thus only renormalized estimates of these probabilities \cite{Clyde:Ghosh:Littman:2010, hubin2018mode} can be obtained. In this paper, we introduce a reversible genetically modified Markov chain Monte Carlo algorithm (RGMJMCMC), which  modifies GMJMCMC to become a proper MCMC method
providing marginal posterior probabilities directly as Monte Carlo estimates.

\section{The algorithm}

\subsection{Genetically Modified MJMCMC}\label{sec:gmjmcmc}

Consider the case of a fixed predefined set of correlated features (covariates). Then the general model space $\MS$ is of size $2^q$ and standard MCMC algorithms tend to get stuck in local maxima if there are correlations between covariates \cite{Clyde:Ghosh:Littman:2010, hubin2018mode}. The basic idea of MJMCMC \cite{hubin2018mode} is to make a large jump (changing many model components) followed by local optimization within the discrete model space to obtain a proposal. The MJMCMC algorithm requires that all $q$ covariates defining the model space are potentially considered at each iteration of the algorithm. But if $q$ is large, it becomes impossible to specify and store all $2^q$ models in $\MS$. The idea behind GMJMCMC is to apply the MJMCMC algorithm iteratively to smaller sets of model components of size $s\ll q$. Here, $s$ is specified to be larger or equal to the maximal possible size $Q$ of the assumed true model in the defined model space, which is a necessary condition to be able to use GMJMCMC  \cite{hubin2018novel}. This constraint also reduces the number of models in the model space $\MS$ to  $\sum_{k=1}^{Q}{q \choose k}$. As shown in Theorem 1 in \cite{hubin2018novel}, GMJMCMC is irreducible in the defined model space of models of size up to $s$ under some easy to satisfy regularity conditions. Yet, GMJMCMC is not a proper MCMC and one cannot use the frequencies of different models in the Markov chain to estimate their posteriors. Instead, we use the renormalized estimates \cite{Clyde:Ghosh:Littman:2010, hubin2018mode, hubin2018novel, hubin2018deep} from a subspace $\MS^* \subset \MS$:
\begin{equation}
\widehat{p}(\M|\By) =  
\frac{p(\M)p(\By|\M)}
       {\sum_{\M' \in \MS^*}{p(\M')p(\By|\M')}}\,\text{I}(\M\in\MS^*) \;,
 \label{approxpost}
\end{equation}
which asymptotically converge to ${p}(\M|\By)$ as the number of iterations grows.

In GMJMCMC, we let $\mathcal{F}_0$ be all $q$ input features and $\mathcal{S}_0\subseteq\mathcal{F}_0$ be some subset of them. Then, throughout our search we generate a sequence of so called \emph{populations} $\mathcal{S}_1,\mathcal{S}_2,...,\mathcal{S}_{T_{max}}$. Each  $\mathcal{S}_t$ is a set of $s$ features and forms a separate \textit{search space} for exploration through MJMCMC iterations.
Populations dynamically evolve allowing GMJMCMC to explore different parts of the total model space.
Algorithm~2 in \cite{hubin2018novel} summarizes this procedure. The generation of $\mathcal{S}_{t+1}$ given $\mathcal{S}_t$ works as follows: 
Members of the new population $\mathcal{S}_{t+1}$ are generated by applying certain transformations to components of $\mathcal{S}_t$. 
First some components with low frequency from search space $\mathcal{S}_t$  are removed using a $filtration$ operator. The removed components are then replaced, where each replacement  is generated randomly by a \textit{mutation} operator with probability $P_m$, by a \textit{crossover}  operator with probability $P_c$, by a \textit{modification} operator with probability $P_t$ or by a \textit{projection} operator with probability $P_p$, where $P_c+P_m+P_t+P_p = 1$. The operators to generate potential features of $\mathcal{S}_{t+1}$ are formally defined in \cite{hubin2018thesis, hubin2018novel, hubin2018deep}.

\subsection{Reversible Genetically Modified MJMCMC}

The GMJMCMC algorithm described above cannot guarantee that the ergodic distribution of its Markov chain corresponds to the target distribution of interest \cite{hubin2018novel}. An easy modification based on performing both forward and backward swaps between populations can provide a proper MCMC algorithm in the model space of interest.
Consider a  transition $\M \rightarrow \mathcal{S}' \rightarrow \M'_0 \rightarrow ... \rightarrow \M'_k \rightarrow \M'$ with a given probability kernel. Here, $q_S(\mathcal{S}'|\M)$  is the proposal for a new population, 
transitions  $\M'_0 \rightarrow ... \rightarrow \M'_k$ are generated by local MJMCMC within the model space induced by $\mathcal{S}'$, and the transition $\M'_k \rightarrow \M'$ is some randomization at the end of the procedure as described in the next paragraph. 
The following theorem shows the detailed balance equation for the suggested swaps between models.
\begin{theorem}
Assume  $\M\sim p(\cdot|\B y)$ and $(\mathcal{S'},\M_k',\M')$ are generated according to the proposal distribution $q_S(\mathcal{S'}|\M)q_o(\M_k'|\mathcal{S'},\M)q_r(\M'|\mathcal{S},\M_k')$. Assume
further $(\mathcal{S},\M_k)$ are generated according to $\tilde{q}_S(\mathcal{S}|\M',\mathcal{S},\M)q_o(\M_k|\mathcal{S},\M')$. Let
\[
\M^*=\begin{cases}
\M'&\text{with probability $\min\{1,a_{mh}\}$;}\\
\M&\text{otherwise.}
\end{cases}
\]
where
\begin{equation}\label{eq:a.rjmcmc}
a_{mh} = \frac{p(\M'|\B y)q_r(\M|\mathcal{S},\M_k)}{p(\M|\B y)q_r(\M'|\mathcal{S'},\M'_k)}.
\end{equation}
Then  $\M^*\sim p(\cdot|\B y)$.
\end{theorem}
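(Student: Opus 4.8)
The plan is to show that one RGMJMCMC step leaves $p(\cdot|\B y)$ invariant by verifying a detailed balance relation on an \emph{enlarged} state space that carries all the auxiliary quantities produced along the path $\M\rightarrow\mathcal{S}'\rightarrow\M'_0\rightarrow\cdots\rightarrow\M'_k\rightarrow\M'$, and then marginalizing. The whole point of the enlargement is that the intractable local MJMCMC transition density $q_o$ never has to be evaluated: by running the local search both forward and backward it will occur with \emph{identical arguments} in the two directions and cancel. This cancellation is precisely what distinguishes RGMJMCMC from GMJMCMC, and it is the structural heart of the argument.

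Concretely, I would regard a single move as a Metropolis--Hastings step on the extended variable $X=(\M,\mathcal{S}',\M_k',\M',\mathcal{S},\M_k)$, where $(\mathcal{S}',\M_k',\M')$ are the forward proposal components and $(\mathcal{S},\M_k)$ are the reverse-direction auxiliaries generated by $\tilde q_S(\mathcal{S}|\M',\mathcal{S}',\M)\,q_o(\M_k|\mathcal{S},\M')$. The proposal is the deterministic involution $X\mapsto(\M',\mathcal{S},\M_k,\M,\mathcal{S}',\M_k')$ that swaps the primed and unprimed roles; it has unit Jacobian. Writing $\pi_F$ for the joint density of generating $X$ from $\M\sim p(\cdot|\B y)$,
\[
\pi_F = p(\M|\B y)\,q_S(\mathcal{S}'|\M)\,q_o(\M_k'|\mathcal{S}',\M)\,q_r(\M'|\mathcal{S}',\M_k')\,\tilde q_S(\mathcal{S}|\M',\mathcal{S}',\M)\,q_o(\M_k|\mathcal{S},\M'),
\]
and $\pi_B$ for the density obtained by starting from $\M'\sim p(\cdot|\B y)$ and interchanging the roles,
\[
\pi_B = p(\M'|\B y)\,q_S(\mathcal{S}|\M')\,q_o(\M_k|\mathcal{S},\M')\,q_r(\M|\mathcal{S},\M_k)\,\tilde q_S(\mathcal{S}'|\M,\mathcal{S},\M')\,q_o(\M_k'|\mathcal{S}',\M),
\]
one checks directly that $\pi_B(X)=\pi_F(\mathrm{swap}(X))$, so the standard acceptance probability for this involutive proposal is $\min\{1,\pi_B/\pi_F\}$.

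The next step is to compute $\pi_B/\pi_F$. The two factors $q_o(\M_k'|\mathcal{S}',\M)$ and $q_o(\M_k|\mathcal{S},\M')$ appear in both $\pi_F$ and $\pi_B$ with exactly the same arguments and cancel, leaving
\[
\frac{\pi_B}{\pi_F}=\frac{p(\M'|\B y)}{p(\M|\B y)}\cdot\frac{q_r(\M|\mathcal{S},\M_k)}{q_r(\M'|\mathcal{S}',\M_k')}\cdot\frac{q_S(\mathcal{S}|\M')\,\tilde q_S(\mathcal{S}'|\M,\mathcal{S},\M')}{q_S(\mathcal{S}'|\M)\,\tilde q_S(\mathcal{S}|\M',\mathcal{S}',\M)}.
\]
I would then invoke the design property of the backward population proposal, namely the symmetry $q_S(\mathcal{S}'|\M)\,\tilde q_S(\mathcal{S}|\M',\mathcal{S}',\M)=q_S(\mathcal{S}|\M')\,\tilde q_S(\mathcal{S}'|\M,\mathcal{S},\M')$, which makes the last factor equal one. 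This gives $\pi_B/\pi_F=a_{mh}$ exactly as in \eqref{eq:a.rjmcmc}.

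With $a_{mh}=\pi_B/\pi_F$, detailed balance on the extended space is automatic, since $\pi_F\min\{1,a_{mh}\}=\min\{\pi_F,\pi_B\}=\pi_B\min\{1,a_{mh}^{-1}\}$; accounting for the rejection (diagonal) mass is routine because a rejection contributes symmetrically. Integrating out the auxiliary variables $(\mathcal{S}',\M_k',\mathcal{S},\M_k)$ then yields detailed balance for the marginal kernel on $(\M,\M')$ with respect to $p(\cdot|\B y)$, so $p(\cdot|\B y)$ is invariant and $\M^*\sim p(\cdot|\B y)$. I expect the main obstacle to lie not in the final Metropolis--Hastings bookkeeping but in the careful accounting of the auxiliary variables that makes the two $q_o$ factors line up for cancellation, together with establishing that the construction of $\tilde q_S$ genuinely satisfies the population-proposal symmetry used above; once those two structural facts are secured, the rest is the standard involutive argument.
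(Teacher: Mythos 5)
Your proof is correct and follows essentially the same route as the paper's: both are extended-space Metropolis--Hastings arguments in which the intractable local-search densities $q_o$ (and the population proposals) appear with identical arguments in numerator and denominator and cancel, leaving exactly $a_{mh}$; you merely package the move as a deterministic involution on the doubly-augmented state rather than as a stochastic proposal of $(\M',\mathcal{S},\M_k)$ against the extended target $\bar{p}(\M,\mathcal{S}',\M_k')$. The only substantive difference is that you make explicit the symmetry condition $q_S(\mathcal{S}'|\M)\,\tilde{q}_S(\mathcal{S}|\M',\mathcal{S}',\M)=q_S(\mathcal{S}|\M')\,\tilde{q}_S(\mathcal{S}'|\M,\mathcal{S},\M')$, which the paper's proof implicitly enforces by taking the backward population proposal to be $q_S(\mathcal{S}|\M')$ itself.
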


\begin{proof}
Define $\bar{p}(\M,\mathcal{S'},\M_k')\equiv p(\M|\B y)q_\mathcal{S}(\mathcal{S'}|\M)q_o(\M_k'|\mathcal{S'},\M)$. Then by construction $(\M,\mathcal{S'},\M_k')\sim \bar{p}(\M,\mathcal{S},\M_k')$.
Define $(\M',\mathcal{S},\M_k)$ to be a proposal from the distribution  $q_r(\M'|\mathcal{S},\M_k')q_S(\mathcal{S}|\M')q_o(\M_k|\mathcal{S},\M')$.
Then the Metropolis-Hastings acceptance ratio becomes
\[
\frac{\bar{p}(\M',\mathcal{S},\M_k)q_r(\M|\mathcal{S},\M_k)q_S(\mathcal{S'}|\M)q_o(\M_k'|\mathcal{S'},\M)}{\bar{p}(\M,\mathcal{S'},\M_k')q_r(\M'|\mathcal{S}',\M_k')q_S(\mathcal{S}|\M')q_o(\M_k|\mathcal{S},\M')}
\]
which reduces to $a_{mh}$. 
\end{proof}

From Theorem 1, it follows that if the Markov chain is irreducible in the model space then it is ergodic and converges to the right posterior distribution. The described procedure marginally generates samples from the target distribution, i.e. the posterior model probabilities $p(\M|\By)$. Instead of using the approximation~\eqref{approxpost}  one can get frequency-based estimates of the model posteriors $p(\M|\By)$. For a sequence of simulated models $\M^1,...,\M^W$ from an ergodic MCMC algorithm with a stationary distribution $p(\M|\By)$ it holds that
\begin{equation}\label{map2}
\widetilde{p}(\M|\By)=W^{-1}\sum_{i=1}^W \text{I}(\M^{(i)} = \M) \xrightarrow[W\rightarrow\infty]{d} p(\M|\By), \; 
\end{equation}
and similar results are valid for estimates of the posterior marginal inclusion probabilities or any other parameters of interest \cite{Clyde:Ghosh:Littman:2010, hubin2018mode}.

In practise, proposals $q_\mathcal{S}(\mathcal{S}'|\M)$  are obtained as follows: First all members of $\M$ are included. Then additional
features are added by the same operators as described in Section~\ref{sec:gmjmcmc} but with $\mathcal{S}_t$ replaced by the population including all components in $\M$. The randomization $\M'\sim q_r(\M|\mathcal{S}',\M_k')$ is performed by potential swapping of the features within $\mathcal{S}'$, each with a small probability $\rho_r$.
Note that this might give a reverse probability
$q_r(\M|\mathcal{S},\M_k)$ being zero if $\mathcal{S}$ does not include all components in $\M$. In that case the proposed model is not accepted. Otherwise 
the ratio of the proposal probabilities can be written as $\frac{q_r(\M|\mathcal{S},\M_k)}{q_r(\M'|\mathcal{S}',\M'_k)} = \rho_r^{d(\M,\M_k)-d(\M',\M'_k)}$, where $d(\cdot,\cdot)$ is the Hamming distance (the number of components differing).
\subsubsection*{Delayed rejection}
The computationally most demanding parts of the RGMJMCMC algorithm are the forward and backward MCMC searches. Often, the proposals generated by forward search have a very small probability $\pi(\M')$ resulting in a low acceptance probability regardless of the way the backward auxiliary variables are generated. In such cases, one would like to reject directly without performing the backward search. This is achieved by the delayed acceptance procedure~\cite{banterle2015accelerating} which can be applied in our case due to the following result:
\begin{theorem}
Assume $\M\sim p(\cdot|\B y)$ and $\M'$ is generated according to the RGMJMCMC algorithm. Accept $\M'$ if
both
\begin{enumerate}
\item $\M'$ is preliminarily accepted with a probability $\min\{1,\tfrac{p(\M'|\B y)}{p(\M|\B y)}\}$,
\item $\M'$  is finally accepted with a probability 
$\min\{1,\tfrac{q_r(\M|\mathcal{S},\M'_k)}{q_r(\M'|\mathcal{S}',\M_k)}\}$.
\end{enumerate}
Then also $\M \sim p(\cdot|\B y)$.
\end{theorem}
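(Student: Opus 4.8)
The plan is to recognize the two-stage accept/reject rule as an instance of the delayed acceptance scheme of \cite{banterle2015accelerating} applied to the Metropolis--Hastings kernel established in Theorem~1, and then to verify that the induced transition kernel still satisfies detailed balance with respect to $p(\cdot|\B y)$. The starting point is the observation that the single acceptance ratio $a_{mh}$ of \eqref{eq:a.rjmcmc} factorizes as $a_{mh}=r_1 r_2$, where $r_1=\frac{p(\M'|\B y)}{p(\M|\B y)}$ is the cheap target ratio and $r_2=\frac{q_r(\M|\mathcal{S},\M'_k)}{q_r(\M'|\mathcal{S}',\M_k)}$ is the expensive backward-search proposal ratio. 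The two stages of the theorem accept with probabilities $g_1=\min\{1,r_1\}$ and $g_2=\min\{1,r_2\}$ respectively, so the overall acceptance probability of the delayed rule is the product $g_1 g_2$, and the goal is to show this product yields the same stationary law as the full ratio $a_{mh}$.

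First I would record the elementary identity that, whenever a factor obeys the self-reversibility relation $r_i(\M\!\to\!\M')\,r_i(\M'\!\to\!\M)=1$, its stage acceptances satisfy $\frac{\min\{1,r_i(\M\to\M')\}}{\min\{1,r_i(\M'\to\M)\}}=r_i(\M\to\M')$; this is checked by splitting into the cases $r_i\le 1$ and $r_i>1$. Next I would verify that both factors are self-reversible. For $r_1$ this is immediate, since reversing the roles of $\M$ and $\M'$ inverts the target ratio. For $r_2$ it follows from the symmetry of the auxiliary construction: reversing the move swaps $(\mathcal{S}',\M_k')$ with $(\mathcal{S},\M_k)$, so that $r_2(\M'\to\M)=1/r_2(\M\to\M')$.

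Combining these identities, the ratio of the overall delayed acceptance in the two directions becomes $\frac{g_1(\M\to\M')\,g_2(\M\to\M')}{g_1(\M'\to\M)\,g_2(\M'\to\M)}=r_1 r_2=a_{mh}$. I would then substitute this into the extended-space balance relation, letting $Q$ denote the proposal kernel that generates $(\mathcal{S}',\M_k')$ forward and $(\mathcal{S},\M_k)$ backward. By Theorem~1 this kernel already satisfies $p(\M|\B y)\,Q(\M\to\M')\,a_{mh}=p(\M'|\B y)\,Q(\M'\to\M)$, whence $p(\M|\B y)\,Q(\M\to\M')\,g_1(\M\to\M')g_2(\M\to\M')=p(\M'|\B y)\,Q(\M'\to\M)\,g_1(\M'\to\M)g_2(\M'\to\M)$. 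This is detailed balance for the delayed kernel, and integrating out the auxiliary variables $(\mathcal{S},\mathcal{S}',\M_k,\M_k')$ gives stationarity of $p(\cdot|\B y)$, so the updated state $\M^*$ is distributed as $p(\cdot|\B y)$.

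The main obstacle I anticipate is the careful bookkeeping of the auxiliary variables rather than any deep argument: one must ensure that the self-reversibility of $r_2$ is stated on the correctly extended state space and that the factors $r_1,r_2$ being split here reproduce exactly the $a_{mh}$ of Theorem~1. In particular the intermediate states $\M_k,\M_k'$ and populations $\mathcal{S},\mathcal{S}'$ must be matched consistently between the forward and backward directions, so that the product $g_1 g_2$ genuinely equals $a_{mh}$ as a function on the joint space; once this matching is made explicit the remaining steps are routine.
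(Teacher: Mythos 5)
Your proposal is correct and follows essentially the same route as the paper: the same factorization $a_{mh}=a_{mh}^1\cdot a_{mh}^2$ into the cheap target ratio and the expensive randomization ratio, together with the observation that each factor is self-reversible, which is exactly the hypothesis of the delayed-acceptance result of Banterle et al.\ that the paper invokes. The only difference is that you unpack that cited result by verifying the detailed-balance identity $\min\{1,r\}/\min\{1,1/r\}=r$ explicitly on the extended space, which the paper leaves to the reference.
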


\begin{proof}
It holds for $a_{mh}$ given by~\eqref{eq:a.rjmcmc} that
\begin{align*}
a_{mh}(\M,\mathcal{S}',\M_k';\M',\mathcal{S},\M_k)
=&  a_{mh}^1(\M,\mathcal{S}',\M_k';\M',\mathcal{S},\M_k)\times a_{mh}^2(\M,\mathcal{S}',\M_k';\M',\mathcal{S},\M_k)
\intertext{where}
a_{mh}^1(\M,\mathcal{S}',\M_k';\M',\mathcal{S},\M_k)=&\frac{p(\M'|\B y)}{p(\M|\B y)},\quad
a_{mh}^2(\M,\mathcal{S}',\M_k';\M',\mathcal{S},\M_k)=\frac{q_r(\M|\mathcal{S},\M'_{k})}{q_r(\M'|\mathcal{S}',\M_k)}
\end{align*}
Since $a_{mh}^j(\M,\mathcal{S}',\M_k';\M',\mathcal{S},\M_k)
=[a_{mh}^j(\M',\mathcal{S},\M_k;\M,\mathcal{S},\M'_k)]^{-1}$ for $j=1,2$, by the general results of~\cite{banterle2015accelerating} we obtain an invariant kernel for the target.
\end{proof}

\section{Applications}\label{sec:applications}

We repeat the experiments from \cite{hubin2018deep} concerned with recovering the planetary mass law (I), the 3rd Kepler's law (II), and a logic regression example (III). We follow the original experimental design and refer the reader to \cite{hubin2018deep} for full detail. The parallelization strategy is described in \cite{hubin2018novel, hubin2018deep}. The number of threads is denoted by T in the following tables.  Table~\ref{tab:laws} reports results for datasets (I) and (II). Power, FDR, and the expected number of FP are estimated based on 100 runs of RGMJMCMC and GMJMCMC for each number of threads, respectively. The set of nonlinearities $\mathcal{G}_2$ from \cite{hubin2018deep} is used in these experiments. Similarly, Table~\ref{tab:logic} gives results for the logic regression example (III), where the set of nonlinearities $\mathcal{G}_1$ from \cite{hubin2018deep} is used for BGNLM.
\begin{table}[H]
\centering
\caption{Overall Power, average number of false positives (FP) and FDR for detecting the planetary mass law (planetary mass as response) and 3rd Kepler's law (semi-major axis as response). For the former, only $R^3_p\times \rho_p$ is considered as a TP discovery. For the latter,  $F_1 = \left(P\times P \times  M_h\right)^{\frac{1}{3}}$,  $F_2 = \left(P\times P \times  R_h \right)^{\frac{1}{3}}$  and  $F_3 = \left(P \times P \times T_h \right)^{\frac{1}{3}}$ are counted as TPs. Results for GMJMCMC are given in parentheses.} 

\label{tab:laws}
\begin{tabular}{lccc|lccc}
\multicolumn{4}{c}{Planetary mass law} &\multicolumn{4}{c}{3rd Kepler's law}\\\hline 
T&Power&FP&FDR&T&Power& FP&FDR\\\hline
16&0.94 (0.93)&0.29 (0.36)&0.18 (0.22)&64&1.00 (0.99)&0.04 (0.04)&0.02 (0.02) \\
4&0.63 (0.69)&0.64 (0.49)&0.38 (0.34)&16&0.65 (0.83)&0.88 (0.55)&0.39 (0.22)\\
1&0.29 (0.42)&1.54 (1.25)&0.71 (0.58)&1&0.06 (0.14)&2.14 (1.81)&0.94 (0.86)\\\hline
\end{tabular}
\end{table}

Tables~\ref{tab:laws} and \ref{tab:logic} illustrate that in all three applications the RGMJMCMC algorithm is capable of estimating the posterior marginal probabilities of different features, and thus to recover the true data generative processes, with reasonable Power and FDR. The performance is on par with GMJMCMC for datasets (I) and (II). For dataset (III), it is a bit worse than GMJMCMC for BGNLM but slightly better than BLR (see logic regression example in \cite{hubin2018deep}). 
\begin{table}[H]
\centering
\caption{Power for individual trees, overall power,  expected number of FP, and FDR are compared between RGMJMCMC (R), GMJMCMC (G) and Bayesian Logic regression (L). Logic expressions from the data generating model are $L_1=X_7$, $L_2=X_8$, $L_3=X_2*X_9$, $L_4=X_{18}*X_{21}$, $L_5=X_{1}* X_{3}* X_{27}$, $L_6=X_{12}* X_{20}* X_{37}$, $L_7=X_{4}* X_{10}* X_{17}* X_{30}$, $L_8=X_{11}* X_{13}* X_{19}* X_{50}$. } 
\label{tab:logic}
\small
\begin{tabular}{lcccccccccccc}
\hline
&T&$L_1$& $L_2$& $L_3$& $L_4$& $L_5$& $L_6$& $L_7$& $L_8$& Power& FP& FDR\\
\hline
R&32&1.00&1.00&0.96&1.00&1.00&1.00&0.92&0.89&0.97&1.14&0.13\\
G&32&1.00&1.00&1.00&1.00&1.00&1.00&0.99&0.98&1.00&0.51&0.06\\
L&32&0.99&1.00&1.00&0.96&1.00&0.99&0.91&0.38&0.90&1.09&0.13\\
\hline

\end{tabular}
\end{table}
\section{Discussion}
In this paper, we have introduced RGMJMCMC and proved its theoretical properties. We have also repeated some experiments described in \cite{hubin2018deep} to assess RGMJMCMC in terms of model identification. In an extended publication, it would be of interest to additionally evaluate the accuracy of posterior estimates and the performance in terms of out-of-sample prediction accuracy.





\bibliographystyle{abbrv} 
\bibliography{ref}

\end{document}